\newtheorem{definition}{Definition}[section]%
\newtheorem{theorem}[definition]{Theorem}%
\newtheorem{proposition}[definition]{Proposition}%
\newtheorem{assumption}[definition]{Assumption}%
\newtheorem{remark}[definition]{Remark}
\newcommand{\tm}{\times}%
\newcommand{\trn}{^{\scriptscriptstyle \top}}%
\def\field#1{\mathbb #1}%
\def\R{\field{R}}%
\newcommand{\N}{\mathbb{N}}%
\newcommand{\UC}{\mathcal{U}}%
\newcommand{\diag}{\mathrm{diag}}%
\DeclareMathOperator{\id}{id}
\def\K{\mathcal{K}}%
\def\Kinf{\mathcal{K}_\infty}%
\let\ol=\overline%
\let\ul=\underline%
\title{\huge Compositional Construction of Abstractions for Infinite Networks of Switched Systems}%
\author{Maryam Sharifi\thanks{M.~Sharifi is with the School of Electrical and Computer Engineering, University of Tehran, Iran; e-mail: \texttt{sharifi.m@ut.ac.ir}.},
	Abdalla~Swikir\thanks{A.~Swikir is with Department of Electrical and Computer Engineering, Technical University of Munich, Germany; e-mail: \texttt{abdalla.swikir@tum.de}.},
	Navid~Noroozi\thanks{N.~Noroozi is with the Institute of Informatics, LMU Munich, Germany; e-mail: \texttt{navid.noroozi@lmu.de}. His work is supported by the DFG through the grant WI 1458/16-1.}, and
	Majid Zamani\thanks{M.~Zamani is with the Computer Science Department, University of Colorado Boulder, CO 80309, USA. 
		M.~Zamani is also with the Institute of Informatics, LMU Munich, Germany;  email: {\tt\small majid.zamani@colorado.edu}. His work is supported in part by the DFG through the grant ZA 873/4-1 and the H2020 ERC Starting Grant AutoCPS (grant agreement No.~804639).}
}
\begin{document}

	
	\maketitle%
	
	\begin{abstract}
		We construct compositional continuous approximations for an interconnection of infinitely many discrete-time switched systems.
		An approximation (known as abstraction) is itself a continuous-space system, which can be used as a replacement of the original (known as concrete) system in a controller design process.
		Having synthesized a controller for the abstract system, the controller is refined to a more detailed controller for the concrete system.
		To quantify the mismatch between the output trajectory of the approximation and of that the original  system, we use the notion of so-called simulation functions.
		In particular, each subsystem in the concrete network and its corresponding one in the abstract network is related through a local simulation function.
		We show that if the local simulation functions satisfy a certain small-gain type condition developed for a network of infinitely many subsystems, then the aggregation of the individual simulation functions provides an overall simulation function between the overall abstraction and the concrete network.
		For a network of linear switched systems, we systematically construct local abstractions and local simulation functions, where the required conditions are expressed in terms of linear matrix inequalities and can be efficiently computed.
		We illustrate the effectiveness of our approach through an application to frequency control in a power gird with a switched (i.e. time-varying) topology.
	\end{abstract}

	\section{introduction}
%
The high cost of incorrect configuration of a control system, on one hand, and safety concerns, on the other hand, call for automated and provably correct techniques for the verification and synthesis of modern control systems.
In addition, emergent applications which consist of large-scale networked systems such as smart grids, connected automated vehicles, swarm robotics, etc. necessitate advanced control objectives going well beyond classic control problems such as regulation and tracking.

The complexity of control objectives, large and time-varying number of participating agents, and the complexity of the problem require methods on automated synthesis of provably correct controllers by joining forces from control theory and computer science.
Particularly, a discrete abstraction (refereed to as symbolic model) provides automated synthesis of a correct-by-design controller for the original (referred to as concrete) system. 
In this approach, the controller synthesis problem can be algorithmically solved over a finite abstraction of the concrete system.
Then, the constructed controller is refined back to the original system based on some behavioral relation between the original system and its finite abstraction such as approximate alternating simulation relations~\cite{pt09}.

	
	The applicability of finite abstractions is considerably limited due to the computational complexity of constructing discrete approximations of the concrete system.
	Therefore, a brute force approach to large-scale systems is not feasible.
	A  way to reduce this
	computational complexity is to introduce a pre-processing step by constructing so-called \emph{continuous} abstractions.
	In that way, a continuous-space system, but possibly with a \emph{lower} dimension, is obtained for the concrete system~\cite{girard2009hierarchical,Smith.2018,Smith.2019}.
	To further manage the computational complexity, one may divide a possibly large-scale network into several smaller subsystems and then construct an abstraction for each subsystem individually.
	The methodology to achieve an abstraction for the overall network via the interconnection of the individual abstractions is called a compositional approach~\cite{rungger2016compositional,ZamaniArcak2017,Noroozi.2018b}.
	However, an efficient approach which is \emph{independent} of the size of the network and potentially applicable to infinite-dimensional cases is still missing.
	
	Motivated by the above discussion, this paper aims at providing a \emph{scale-free} compositional approach for the construction of continuous abstractions for \emph{arbitrarily}	 large-scale networks of discrete-time switched systems.
	Inspired by the works in the literature regarding stability analysis of large-scale systems e.g.~\cite{NMK20b,DaP20,Bamieh.2012,Barooah.2009,BPD02}, to address the scalability issue, we over-approximate a finite-but-large network with a network composed of \emph{infinitely} many subsystems, which we call it an infinite network.
	It is widely accepted that an infinite network {captures} the essence of its corresponding finite network; see, e.g., a vehicle platooning application in~\cite{Jovanovic.2005b}.
	This treatment leads to an infinite-dimensional system and calls for a more rigorous and detailed setting.
	In particular, we adapt the notion of simulation functions~\cite{girard2009hierarchical} to the case of \emph{infinite}-dimensional systems.
	The existence of a simulation function ensures that the error between the output trajectories of the abstract system and that of the concrete system is quantitatively bounded in a certain sense (cf. Definition~\ref{def:sim-Function}).

	Following the compositionality approach, we assign to each subsystem an individual simulation function and construct each local abstraction accordingly.
	Then we aggregate them to compose an abstraction for the overall network.
	We show that the aggregation yields a continuous abstraction for the overall concrete network if a certain small-gain condition, which has been recently developed in~\cite{kawan2019lyapunov}, is satisfied.
	The effectiveness of our approach is verified by an application to frequency control in a power grid  with a time-varying topology.
	
	{\it Notation:} We write $\N_0 (\N)$ for the set of nonnegative (positive) integers.
	For vector norms on finite- and infinite-dimensional vector spaces, we write $|\cdot|$.
	By $\ell^p$, $p\in[1,\infty)$, we denote the Banach space of all real sequences $x = (x_i)_{i\in\N}$ with finite $\ell^p$-norm $|x|_p<\infty$, where $|x|_p = (\sum_{i=1}^{\infty}|x_i|^p)^{1/p}$ for $p < \infty$.
	If $X$ is a Banach space, we write $r(T)$ for the spectral radius of a bounded linear operator $T:X\to X$. 
	The identity function is denoted by $\id$.
	We will consider $\K$ and $\Kinf$ comparison functions, see~\cite[Chapter 4.4]{Khalil.2002} for definitions.
	
	\section{System Description}
	
		We study the interconnection of countably many switched systems, each given by a finite-dimensional difference equation. We define the switching signal functions ${\sigma_i}:\N_0 \to S_i$, ${i\in\N}$ for $S_i\in\{1,2,\dots,r_i\}$ which is a finite index set with $r_i\in\N$. We denote the set of such switching signals by ${\cal{S}}_i$. The $i$-th subsystem ($i\in\N$) is written as%
	\begin{equation}\label{eq_ith_subsystem}
	{\Sigma _i}:\quad \left\{ \begin{array}{l}
	\mathbf x_i(k+1)  = {f_{i,{\sigma_i(k)}}}({\mathbf x_i(k)},{\mathbf w_i(k)},{\mathbf u_i(k)}),\\
	{\mathbf y_i(k)} = h_{i, \sigma_i(k)}(\mathbf x_i(k)),
	\end{array} \right.%
	\end{equation}
	where $\mathbf x_i:\N_0 \rightarrow  \R^{n_i}$, $\mathbf w_i:\N_0 \rightarrow \R^{N_i}$, $\mathbf u_i:\N_0 \rightarrow \R^{m_i}$, and $\mathbf{y}_i:\N_0\rightarrow  \R^{q_i}$ are state signal, internal input signal, external input signal, and output signal, respectively.
	
	The family $(\Sigma_i)_{i\in\N}$ comes together with sequences $(n_i)_{i\in\N}$, $(m_i)_{i\in\N}$ of positive integers and finite sets $I_i \subset \N \backslash \{i\}$, and $\overline{I}_i \subset \N$ enumerate the neighbors of $\Sigma_i$, i.e., those systems $\Sigma_j,j \in I_i$, $\Sigma_{j'},j' \in \overline{I}_i$ that affect or are affected by $\Sigma_i$, respectively. By definition we require that $i\notin I_i\cup\overline{I}_i$, $\forall i\in \N$.
		We denote $\mathbf w_i(k)= \left( {{\mathbf w}_{ij}(k)}\right)_{j\in I_i}\in \R^{N_i}$ for $N_i := \sum_{j\in I_i}n_j$ as the internal inputs to show the interconnections. 
	The output functions ${h_{i, \sigma_i(k)}}({\mathbf x_i(k)})=\left( {h_{ij,\sigma_i(k)}(\mathbf x_i(k))}\right)_{j\in (i\cup\overline{I}_i)}$, $\mathbf y_i(k)= \left({{\mathbf y}_{ij}(k)}\right)_{j\in (i\cup\overline{I}_i)}$ are elements of $\R^{q_i}$. 
	Note that $\mathbf w_i(k)$ and $\mathbf y_i(k)$ are partitioned into sub-vectors and we aggregate all the subsystems $\Sigma_i$ through the interconnection constraints given by $ \mathbf w_{ij}(k)=\mathbf y_{ji}(k)$ for all $i\in\N$ and for all $j\in I_i$.
	In that way, the interconnection of $\Sigma _i$, $i\in \N$, is described by 
		\begin{equation}\label{eq_interconnection}
		{\Sigma}:\quad \left\{ \begin{array}{l}
		\mathbf x(k+1)  = {f_{\sigma(k)}}({\mathbf x(k)},{\mathbf u(k)}),\\
		{\mathbf y(k)} = h_{\sigma(k)}(\mathbf x(k)),
		\end{array} \right.%
		\end{equation}
		where $\mathbf x(k) \!\!=\!\! (\mathbf x_i(k))_{i\in\N}$,  $\mathbf u(k) \!\!=\!\! (\mathbf u_i(k))_{i\in\N}$, $\mathbf y(k) = (\mathbf y_{ii}(k))_{i\in\N}$, $\sigma(k) \!\!=\!\! (\sigma_i(k))_{i\in\N}$, $f_{\sigma(k)}(\mathbf x(k),\mathbf u(k)) = \left(f_{i,\sigma_i(k)}(\mathbf x_i(k),\mathbf w_i(k),\mathbf u_i(k))\right)_{i\in\N}$, and $h_{\sigma(k)}(\mathbf x(k)):=\left(h_{ii,\sigma_i(k)}(\mathbf x_i(k))\right)_{i\in\N}$.
	
	Clearly, system~\eqref{eq_interconnection} is an \emph{infinite}-dimensional system, which asks for careful choice of the state and input spaces. We choose appropriate Banach spaces $X \subset \prod_{i\in\N}\R^{n_i}$  and $U \subset \prod_{i\in\N}\R^{m_i}$, and restrict $f_{\sigma(k)}$ to $X \tm U$, $\sigma:\N \to S$, for all $k\in\N_0$, where $S=\prod_{i\in\N}S_i$.
	
	We  model the state space $X$ of $\Sigma$ as a Banach space of sequences $x = (x_i)_{i\in\N}$ with $x_i \in \R^{n_i}$.
	The most natural choice is an $\ell^p$-space. To define such a space, we first fix a norm on each $\R^{n_i}$.
	Then, for every $p \in [1,\infty)$, we put%
	\begin{equation*}
	\ell^p(\N,(n_i)) := \Bigl\{x = (x_i)_{i\in\N} : x_i \in \R^{n_i},\ \sum_{i\in\N}|x_i|^p < \infty \Bigr\}%
	\end{equation*}
	and equip this space with the norm $|x|_p := (\sum_{i\in\N}| x_i|^p)^{1/p}$.
	
	As the state space of the system $\Sigma$, we consider $X := \ell^p(\N,(n_i))$ for a fixed $p \in [1,\infty)$.
	Similarly, for a fixed $q \in [1,\infty)$, we consider the \emph{external input space} $U := \ell^q(\N,(m_i))$, 
	where we fix norms on $\R^{m_i}$ that we simply denote by $|\cdot|$ again.
	The space of admissible \emph{external input functions} $\mathbf u$ is defined by $\UC := \bigl\{\mathbf u:\N_0 \rightarrow U  \bigr\}$.
	We denote the corresponding solutions by $\mathbf x(k,x,\sigma,\mathbf u)$ for any $k\in\N_0$, any initial value $x\in X$, any switching signal $\sigma\in\cal{S}$, and any control input $\mathbf u\in\UC$. 

	We refer to system \eqref{eq_interconnection} as the \emph{concrete} system, which is often hard to control. In order to synthesize these systems, using a simpler, though less precise system called an \emph{abstract} system is beneficial.
	Adopting the same \emph{notational convention} as those for $\Sigma_i$ and $\Sigma$, but  with the $\hat{\cdot}$ sign on the top of the respective ones, we introduce the symbols for the abstract subsystems  $\hat \Sigma_i$ and the corresponding overall system and $\hat \Sigma$, respectively.
	\section{Abstractions for switched discrete-time systems}\label{sec:SIM}
	In this section, we introduce the notion of so-called simulation functions for the discrete-time switched systems with only external inputs.
	A simulation function of $\hat\Sigma$ by $\Sigma$ is a function over their state spaces which explain how a state trajectory of $\hat\Sigma$ can be transformed into a state trajectory of $\Sigma$ to make the distance between the associated output trajectories bounded. 
	Formally, a simulation function is defined as follows:
	\begin{definition}\label{def:sim-Function}
		Consider the systems $\Sigma$ and $\hat \Sigma$ with the same output spaces and fixed $p,q \in [1,\infty)$. Let $V_s:X\times \hat X \rightarrow \R_+, s\in S$, be a family of functions. Let there exist positive constants $\alpha, b$, such that for all $s \in S$, $x \in X$, $\hat x \in \hat X$, 
		\begin{align}\label{sim2}
		&\alpha \left| {h_s(x) - \hat h_s(\hat x)} \right|_p^b \le V_s(x,\hat x),
		\end{align}
		and there exist a function ${\rho _{\rm ext}}\in\K$ and a positive constant $\lambda<1$ , such that for all consecutive $s',s\in S$ (i.e., $s'=\sigma(k+1), s=\sigma(k)$ for $k\in \N_0$), and all $x \in X$, $\hat x \in \hat X$ and $\hat u\in \hat U$ there exist  $u \in U$ so that we have
		\begin{align}\label{sim3}
		&\begin{array}{l}
		V_{s'}(f_{s}(x,u),\hat f_{s}(\hat x,\hat u)) - V_{s}(x,\hat x)\\\leq
		- \lambda V_{s}(x,\hat x) + {\rho _{\rm ext}}(|\hat u|_{q}).
		\end{array} %
		\end{align}
		Then, the functions $V_s$ are called the simulation functions from $\hat \Sigma$ to $\Sigma$.%
	\end{definition}
	The following proposition shows the importance of the existence of a simulation function.
	\begin{proposition}
		Consider systems $\Sigma$ and $\hat\Sigma$,  the same output space, and fixed $p,q \in [1,\infty)$. Let a set of simulation functions $V_s$, $s\in S$, from $\hat\Sigma$ to $\Sigma$ be given. Then there exist a function $\gamma_{\rm{ext}}\in\K$ and positive constants $\vartheta$ and $\beta<1$, such that for any $\sigma\in\cal{S}$, $x\in X$, $\hat x \in \hat X$, $\mathbf{\hat u}\in \hat \UC$, $k\in\N_0$, there exists $\mathbf u\in \UC$ so that we have
		\begin{align}\label{ms}
		&\left| {\mathbf y(k,x,\sigma,\mathbf u) -  \mathbf {\hat y}(k,\hat x,\sigma,\mathbf{\hat u})} \right|_p \nonumber\\&\le \vartheta\beta^k (V_{\sigma(0)}(\xi,\hat \xi))^{\frac{1}{b}}+\gamma_{\rm ext}(|\mathbf{\hat u}|_{q,
			\infty}),
		\end{align}	
		where $|\mathbf{\hat u}|_{q,\infty} := \sup_{k\in \N_0}|\mathbf{\hat u}(k)|_q$ and $b$ as in~\eqref{sim2}.

	\end{proposition}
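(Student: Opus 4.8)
The plan is to obtain \eqref{ms} by iterating the one-step decrease inequality \eqref{sim3} along the coupled trajectories of $\Sigma$ and $\hat\Sigma$ and then converting the resulting bound on the running simulation function into a bound on the output error via \eqref{sim2}.

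First I would fix $\sigma\in{\cal S}$, $x\in X$, $\hat x\in\hat X$ and $\mathbf{\hat u}\in\hat\UC$, and build the concrete input $\mathbf u$ recursively. Setting $\mathbf x(0)=x$ and $\mathbf{\hat x}(0)=\hat x$, at stage $k$, given $\mathbf x(k)$ and $\mathbf{\hat x}(k)$, I apply \eqref{sim3} with $s=\sigma(k)$, $s'=\sigma(k+1)$ (which are consecutive by construction), state pair $(\mathbf x(k),\mathbf{\hat x}(k))$ and abstract input $\hat u=\mathbf{\hat u}(k)\in\hat U$; this furnishes some $\mathbf u(k)\in U$, and I then set $\mathbf x(k+1)=f_{\sigma(k)}(\mathbf x(k),\mathbf u(k))$ and $\mathbf{\hat x}(k+1)=\hat f_{\sigma(k)}(\mathbf{\hat x}(k),\mathbf{\hat u}(k))$. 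This produces an admissible $\mathbf u\in\UC$ with $\mathbf x(k)=\mathbf x(k,x,\sigma,\mathbf u)$ and $\mathbf{\hat x}(k)=\mathbf{\hat x}(k,\hat x,\sigma,\mathbf{\hat u})$ and, abbreviating $V(k):=V_{\sigma(k)}(\mathbf x(k),\mathbf{\hat x}(k))$, the scalar recursion
\begin{equation*}
V(k+1)\le(1-\lambda)\,V(k)+\rho_{\rm ext}\big(|\mathbf{\hat u}(k)|_q\big),\qquad k\in\N_0 .
\end{equation*}

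Next I would telescope this recursion: using $0<1-\lambda<1$, the geometric bound $\sum_{j=0}^{k-1}(1-\lambda)^{k-1-j}\le 1/\lambda$, and monotonicity of $\rho_{\rm ext}$, I get
\begin{equation*}
V(k)\le(1-\lambda)^k\,V(0)+\frac{1}{\lambda}\,\rho_{\rm ext}\big(|\mathbf{\hat u}|_{q,\infty}\big),\qquad k\in\N_0 .
\end{equation*}
Finally, \eqref{sim2} gives $|\mathbf y(k)-\mathbf{\hat y}(k)|_p\le(V(k)/\alpha)^{1/b}$; combining this with the previous display and the elementary inequality $(a+c)^{1/b}\le 2^{\max\{1/b-1,\,0\}}\big(a^{1/b}+c^{1/b}\big)$ valid for $a,c\ge0$, I separate the transient part $\big((1-\lambda)^kV(0)/\alpha\big)^{1/b}=\big((1-\lambda)^{1/b}\big)^{k}(V(0)/\alpha)^{1/b}$ from the input-dependent part. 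This yields \eqref{ms} with $\beta:=(1-\lambda)^{1/b}\in(0,1)$, $\vartheta:=2^{\max\{1/b-1,0\}}\alpha^{-1/b}$, and $\gamma_{\rm ext}(r):=2^{\max\{1/b-1,0\}}\big(\rho_{\rm ext}(r)/(\alpha\lambda)\big)^{1/b}$, which belongs to $\K$ because $\rho_{\rm ext}\in\K$ and $r\mapsto r^{1/b}$ is a $\Kinf$ function.

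The computation is essentially routine, and the only points requiring genuine care are, first, verifying in the recursive construction that the pointwise selections $\mathbf u(k)$ delivered by \eqref{sim3} assemble into an \emph{admissible} external input $\mathbf u\in\UC=\{\mathbf u:\N_0\to U\}$ — immediate here, since $\UC$ imposes nothing beyond $\mathbf u(k)\in U$, but it is the step where the infinite-dimensional structure of $U$ must be respected — and, second, treating the case $b<1$ in the last step, for which the power inequality with the constant $2^{1/b-1}$ (rather than plain subadditivity of $t\mapsto t^{1/b}$) is needed; this is also the reason the statement only asks for $\gamma_{\rm ext}\in\K$ rather than a linear gain.
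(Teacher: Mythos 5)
Your proof is correct, and it is essentially the argument the paper has in mind: the paper omits the proof and only points to the standard ISS-Lyapunov-type iteration of \cite[Lemma 3.5]{Jiang.2001}, which is exactly what you do by recursively selecting $\mathbf u(k)$ from \eqref{sim3}, telescoping the geometric recursion, and converting back through \eqref{sim2} (your care with the case $b<1$ and with admissibility of $\mathbf u\in\UC$ is appropriate, and your $V(0)=V_{\sigma(0)}(x,\hat x)$ matches the intended meaning of $V_{\sigma(0)}(\xi,\hat\xi)$ in \eqref{ms}).
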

		The proof is not presented due to space limitations. Basically it follows similar arguments as those in the proof of~\cite[Lemma 3.5]{Jiang.2001}.
	\begin{remark}\label{input}
		If we are given an interface function $\nu$ that maps
		every $x, \hat{x}$, $\hat{u}$, and $s$ to an input $u = \nu(x, \hat{x},\hat{ u},s)$ so that \eqref{sim3} is
		satisfied, then, the input $\mathbf{u}$ that realizes \eqref{ms} is readily given
		by $\mathbf{u}(k)=\nu(\mathbf x(k), \hat{\mathbf x}(k),\hat{ \mathbf u}(k),\sigma(k))$, see \cite[Theorem 1]{Runggerhscc}.
	\end{remark}
	\section{Compositional Construction of Abstractions and Simulation Functions}\label{sec:SIM-cons}
	In this section, we construct continuous compositional abstraction for an interconnection of countably many discrete-time switched system and the corresponding simulation function from the abstractions of the subsystems and their corresponding simulation functions, respectively. Then, we focus on linear  subsystems and provide conditions under which local quadratic simulation functions with their associated interface functions construct the abstractions. 
	
	We assume that subsystems $\Sigma_i$ for $i \in \N$, given by~\eqref{eq_ith_subsystem}, together with their abstractions $\hat \Sigma_i$ and the there exist simulation functions $V_{i,s_i}, s_i\in S_i$, from $\hat \Sigma_i$ to $\Sigma_i$ satisfying the following assumption
	\begin{assumption}\label{SIM_vi_existence}
		Consider the subsystems $\Sigma_i$ for $i \in \N$, together with their abstractions $\hat \Sigma_i$. For fixed $p,q \in [1,\infty)$, there exist functions $V_{i,s_i}:\R^{n_i}\times \R^{\hat n_i} \rightarrow \R_+, s_i\in S_i$, with the following properties.%
		\begin{itemize}
			\item There are positive constants $\alpha_i$ so that for all $x_i \in \R^{n_i}$, all $\hat x_i \in \R^{\hat n_i}$%
			\begin{equation}\label{sim_eq_viest}
			\alpha_i \left| {h_{i, s_i}(x_i) - \hat h_{i, s_i}(\hat x_i)} \right|^p \le V_{i,s_i}(x_i,\hat x_i).%
			\end{equation}
			\item  There are positive constants $\lambda_i<1,\rho_{i,\rm int}, \rho_{\rm{i,ext}}$, such that for all consecutive  $s'_i, s_i\in S_i$, $x_i \in \R^{n_i}$, $\hat x_i \in \R^{\hat n_i}$, $\hat u_i\in\R^{\hat m_i}$, there exist $u_i\in\R^{m_i}$, so that the following holds for all $w_i \in \R^{N_i}$, $\hat w_i \in \R^{\hat N_i}$: 
			\begin{align}\label{sim_eq_nablaviest}	
			\begin{array}{l}
			V_{i, s'_i}\left(f_{i,s_i}(x_i, w_i, u_i),\hat f_{i,s_i}(\hat x_i, \hat w_i, \hat u_i)\right)- V_{i, s_i}\left(x_i,\hat x_i\right)\\\le
			- \lambda_i V_{i, s_i}(x_i,\hat x_i) + {\rho _{\rm{i,ext}}}|\hat u_i|^{q} + {\rho _{{\mathop{i,\rm int}} }}\left| { w_i - {\hat w_i}} \right|^p.
			\end{array}%
			\end{align}	
		\end{itemize}
	\end{assumption}
	We assume that the following uniformity conditions hold for the constants introduced above.%
	
	\begin{assumption}\label{ass_external_gains}
		There are constants $\underline{\alpha}, \underline{\lambda}, {{\overline\rho _{\rm{ext}}}} > 0$ so that for all $i\in \N$, we have $\ul\alpha \leq \alpha_i , \ul\lambda \leq \lambda_i, {\rho _{\rm{i,ext}}} \leq {{\overline\rho _{\rm{ext}}}}$.
	\end{assumption}
	In order to formulate a small-gain condition, we further introduce the following matrices by utilizing the coefficients from \eqref{sim_eq_nablaviest}:
	\begin{align}\label{condd1}
	&\Lambda := \diag(\lambda_1,\lambda_2,\lambda_3,\ldots), \;\;\Gamma := (\gamma_{ij})_{i,j\in\N},
	\end{align}
	where
	\begin{align}\label{gamma}
	&{\gamma _{ij}}: = \left\{ {\begin{array}{*{20}{c}}
		{{\rho _{i,{\rm{int}}}}{{\bar N}_i}\frac{1}{{{\alpha _j}}},}& j \in I_i ,\\
		{0,}& j \notin I_i ,
		\end{array}} \right.
	\end{align}
	for ${\bar N}_i$ as the number of neighbors of subsystem $i$.
	Now, we define the following matrix by which we express our \emph{small-gain} condition:
	\begin{align}
	\Psi := \Lambda^{-1} \Gamma := (\psi_{ij})_{i,j\in\N} , \,\,\, \psi_{ij} = \gamma_{ij}/\lambda_i .
	\end{align}
	We make the following spectral radius condition which provides a \emph{quantitative} index on the strength of coupling between the subsystems.
	\begin{assumption}\label{ass:spectral-radius}
		The spectral radius $r(\Psi) < 1$.
	\end{assumption}
	We make an assumption on the boundedness of the operator $\Gamma$.
	\begin{assumption}\label{SIM_def}
		The operator $\Gamma = (\gamma_{ij})_{i,j\in\N}$ satisfies
		$\sup_{j \in \N} \sum_{i=1}^{\infty} \gamma_{ij} < \infty$.
	\end{assumption}	
	Note that the assumption above holds if each subsystem is interconnected to a finitely many subsystems.
	
	The following theorem gives the \emph{main result} of the paper, which is a compositional approach for construction of abstractions of infinite interconnected control systems and their corresponding simulation functions.
	\begin{theorem}\label{MTC}
		Consider the infinite networks $\Sigma$ and $\hat\Sigma$ with fixed $p,q \in [1,\infty)$. Suppose that Assumptions~\ref{SIM_vi_existence},~\ref{ass_external_gains},~\ref{ass:spectral-radius} and \ref{SIM_def} hold.
		Then there exists a vector $\mu = (\mu_{i})_{i\in\N}\in \ell^{\infty}$ satisfying $\underline{\mu} \leq \mu_i \leq \overline{\mu}$ with some constants $\underline{\mu},\overline{\mu}>0$, such that the following is satisfied for a constant $0<\lambda_{\infty}<1$.
		\begin{equation}\label{eq_muest2}
		\frac{[\mu\trn(-\Lambda + \Gamma)]_i}{\mu_{i}} \leq -\lambda_{\infty} \quad \forall i \in \N, s\in S.%
		\end{equation}
		Moreover, for all $s_i\in S_i$, $s\in S$,
		\begin{equation*}\label{eq:sim-function-construction}
		V_s(x, \hat x) = \sum_{i=1}^{\infty} \mu_{i} V_{i,s_i}(x_i, \hat x_i),\quad V_s:X\times \hat X \rightarrow \R_+  %
		\end{equation*}
		are simulation functions of $\hat \Sigma$ by $\Sigma$ with $b = p , \alpha = \ul\mu	\ul\alpha$ as in~\eqref{sim2} and $\lambda = \lambda_\infty$ and $\rho_{\mathrm{ext}} : t \mapsto \overline{\mu}\; \ol\rho _{\rm{ext}} t^q$ as in~\eqref{sim3}.
	\end{theorem}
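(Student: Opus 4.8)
\emph{Proof strategy.} The plan is to split the argument into two independent parts: extracting the weight sequence $\mu$ from the small-gain hypotheses, and then checking that $V_s:=\sum_{i\in\N}\mu_iV_{i,s_i}$ fulfils the two requirements of Definition~\ref{def:sim-Function} with the stated constants. For the first part, the plan is to invoke the infinite-dimensional small-gain / Perron--Frobenius machinery of~\cite{kawan2019lyapunov}: by Assumption~\ref{SIM_def} the columns of $\Gamma$ are uniformly summable, so $\Gamma^{\top}$ is a bounded operator on $\linf$, and by Assumption~\ref{ass_external_gains} the diagonal operator $\Lambda^{-1}$ is bounded on $\linf$, whence $\Psi^{\top}=\Gamma^{\top}\Lambda^{-1}$ is bounded on $\linf$ and, by duality, has spectral radius equal to $r(\Psi)<1$ there (Assumption~\ref{ass:spectral-radius}). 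Therefore the Neumann series $\nu:=\sum_{k\ge0}(\Psi^{\top})^{k}\mathbf 1$ (with $\mathbf 1$ the constantly-$1$ sequence) converges in $\linf$, equals $(\id-\Psi^{\top})^{-1}\mathbf 1$, satisfies $\nu\ge\mathbf 1$ componentwise, and obeys $\Psi^{\top}\nu=\nu-\mathbf 1$. Putting $\mu_i:=\nu_i/\lambda_i$ yields $\underline{\mu}\le\mu_i\le\overline{\mu}$ with, say, $\underline{\mu}=1$ and $\overline{\mu}=\|\nu\|_{\infty}/\underline\lambda$; and using $\Gamma^{\top}=\Psi^{\top}\Lambda$ and $\Lambda\mu=\nu$ one gets $[\mu^{\top}(-\Lambda+\Gamma)]_i=-\lambda_i\mu_i+[\Gamma^{\top}\mu]_i=-\nu_i+[\Psi^{\top}\nu]_i=-\nu_i+(\nu_i-1)=-1$ for every $i$, so $[\mu^{\top}(-\Lambda+\Gamma)]_i/\mu_i=-1/\mu_i\le-1/\overline{\mu}=:-\lambda_{\infty}$, which is~\eqref{eq_muest2}; moreover $0<\lambda_{\infty}<1$ since $\mu_i>1$ for all $i$ (because $\nu_i\ge1$ and $\lambda_i<1$).

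For the second part, inequality~\eqref{sim2} is immediate: the $i$-th block of $h_s(x)$ is $h_{ii,s_i}(x_i)$, a single component of the full local output $h_{i,s_i}(x_i)=(h_{ij,s_i}(x_i))_{j\in i\cup\overline{I}_i}$, so $|h_{ii,s_i}(x_i)-\hat h_{ii,s_i}(\hat x_i)|^{p}\le|h_{i,s_i}(x_i)-\hat h_{i,s_i}(\hat x_i)|^{p}\le\alpha_i^{-1}V_{i,s_i}(x_i,\hat x_i)\le(\underline\alpha\,\underline{\mu})^{-1}\mu_iV_{i,s_i}(x_i,\hat x_i)$ by~\eqref{sim_eq_viest} and Assumption~\ref{ass_external_gains}, and summing over $i\in\N$ gives $|h_s(x)-\hat h_s(\hat x)|_{p}^{p}\le(\underline\alpha\,\underline{\mu})^{-1}V_s(x,\hat x)$, i.e.~\eqref{sim2} with $b=p$ and $\alpha=\underline{\mu}\,\underline\alpha$. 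For~\eqref{sim3}, given $x,\hat x$, consecutive $s',s$ and $\hat u=(\hat u_i)_{i}\in\hat U$, I would choose $u=(u_i)_i$ with each $u_i$ the local input furnished by Assumption~\ref{SIM_vi_existence} (equivalently, the local interface of Remark~\ref{input}), and also verify $u\in U$. Since $s',s$ consecutive makes $s_i',s_i$ consecutive for each $i$, summing~\eqref{sim_eq_nablaviest} weighted by $\mu_i$ and substituting $w_{ij}=y_{ji}$, $\hat w_{ij}=\hat y_{ji}$ bounds the left side of~\eqref{sim3} by $\sum_i\mu_i(-\lambda_iV_{i,s_i})+\sum_i\mu_i\rho_{i,\mathrm{ext}}|\hat u_i|^{q}+\sum_i\mu_i\rho_{i,\mathrm{int}}|w_i-\hat w_i|^{p}$. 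Here~\eqref{sim_eq_viest} together with the equivalence of the norm on $\R^{N_i}$ with the blockwise $\ell^{p}$-norm (which introduces the factor $\overline{N}_i$ in~\eqref{gamma}) gives $\mu_i\rho_{i,\mathrm{int}}|w_i-\hat w_i|^{p}\le\sum_{j\in I_i}\mu_i\gamma_{ij}V_{j,s_j}$, and interchanging the nonnegative double sum (Tonelli) turns this into $\sum_j[\mu^{\top}\Gamma]_jV_{j,s_j}$, finite by Assumption~\ref{SIM_def}. Collecting the $V_{i,s_i}$ terms produces $\sum_i[\mu^{\top}(-\Lambda+\Gamma)]_iV_{i,s_i}$, which by~\eqref{eq_muest2} is at most $-\lambda_{\infty}\sum_i\mu_iV_{i,s_i}=-\lambda_{\infty}V_s(x,\hat x)$, while $\sum_i\mu_i\rho_{i,\mathrm{ext}}|\hat u_i|^{q}\le\overline{\mu}\,\overline{\rho}_{\mathrm{ext}}|\hat u|_{q}^{q}$ by Assumption~\ref{ass_external_gains}. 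This is exactly~\eqref{sim3} with $\lambda=\lambda_{\infty}$ and $\rho_{\mathrm{ext}}(t)=\overline{\mu}\,\overline{\rho}_{\mathrm{ext}}t^{q}\in\K$.

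The delicate step is the first part, namely securing $\mu$ bounded both above and uniformly away from zero: this is precisely where the functional-analytic content of~\cite{kawan2019lyapunov} is needed, combining $r(\Psi)<1$ with the boundedness of $\Gamma$ (Assumption~\ref{SIM_def}) and the uniform bound $\underline\lambda\le\lambda_i$ so that $\Psi^{\top}$ has spectral radius below $1$ on $\linf$ and the Neumann series above stays in $\linf$ with a positive lower bound. The remaining steps are routine bookkeeping; the only extra care concerns the finiteness of $V_s$ on $X\times\hat X$ and the membership $u\in U$ of the assembled control, which are guaranteed by the structure imposed when constructing the local abstractions (Section~\ref{sec:SIM-cons}).
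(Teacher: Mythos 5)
Your proposal is correct and takes essentially the same route as the paper: the paper obtains the weight vector $\mu$ and inequality \eqref{eq_muest2} by invoking \cite[Lemma V.10]{kawan2019lyapunov}, and your Neumann-series construction $\nu=(\id-\Psi\trn)^{-1}\mathbf 1$ on $\ell^\infty$ with $\mu_i=\nu_i/\lambda_i$ is precisely an unpacking of that lemma (using the natural $\ell^1$ reading of Assumption~\ref{ass:spectral-radius}, consistent with Assumption~\ref{SIM_def} and the paper's example), while your verification of \eqref{sim2} and \eqref{sim3} by weighted summation of the local estimates, the substitution $w_{ij}=y_{ji}$, and the bound \eqref{eq_muest2} reproduces the paper's chain \eqref{eq_nablaviest}. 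The points you flag as delicate (finiteness of $V_s$ on $X\times\hat X$ and membership $u\in U$) are likewise left implicit in the paper, so your treatment matches its level of rigor.
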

	\begin{proof}
		From~\cite[Lemma V.10]{kawan2019lyapunov}, Assumption~\ref{ass:spectral-radius} (i.e. $r(\Psi) < 1$) implies that there exists a vector $\mu = (\mu_{i})_{i\in\N}\in \ell^{\infty}$ satisfying $\underline{\mu} \leq \mu_i \leq \overline{\mu}$ such that~\eqref{eq_muest2} holds.
		
		Now we show that $V$ in~\eqref{eq:sim-function-construction} satisfies~\eqref{sim2} with $\alpha = \ul\mu	\ul\alpha$.
		For any $s\in S$, $s_i\in S_i$, $x\in X$, $\hat x\in \hat X$, and taking $b=p$, it follows from~\eqref{sim_eq_viest} and Assumption~\ref{ass_external_gains} that
		\begin{align*}
		\sum_{i=1}^{\infty} \mu_i V_{i, s_i}(x_i, \hat x_i) &\geq \sum_{i=1}^{\infty} \mu_i \alpha_i|h_{i, s_{i}}(x_i) - \hat h_{i, s_{i}}(\hat x_i)|^p\\
		& \geq \ul\mu	\ul\alpha \sum_{i=1}^{\infty} |h_{i, s_{i}}(x_i) - \hat h_{i, s_{i}}(\hat x_i)|^p \\
		& \geq \ul\mu	\ul\alpha |h_s(x) - \hat h_s(\hat x)|_p^p  .
		\end{align*}
		Now we show the inequality \eqref{sim3} holds as well. Considering \eqref{sim_eq_nablaviest} and \eqref{eq:sim-function-construction}, we obtain the chain of inequalities in \eqref{eq_nablaviest} for all $s'_i, s_i\in S_i$, $s_j\in S_j$, $s', s\in S$, $i\in \N$.
		\begin{figure*}[ht]
			\rule{\textwidth}{0.4pt}
			\begin{align}\label{eq_nablaviest}
			\begin{array}{l}
			V_{s'}\left(f_{s}(x, u),\hat f_{s}(\hat x, \hat u)\right)- V_{s}(x,\hat x) = \sum\limits_{i = 1}^\infty  \mu_{i}\left[{V_{i, s'_i}}\left(f_{i, s_i}(x_i, w_i, u_i),\hat f_{i, s_i}(\hat x_i, \hat w_i, \hat u_i)\right)- {V_{i, s_i}}(x_i,\hat x_i) \right]
			\\\le \sum\limits_{i = 1}^\infty  {\mu_{i}}( - {\lambda_i}{V_{i, s_i}}(x_i, \hat x_i) + {\rho _{i,{\mathop{\rm int}} }}\left| {{w_i} - {\hat w_i}} \right|^p + {\rho _{i,\rm ext}}|\hat u_i|^{q})
			\\\le \sum\limits_{i = 1}^\infty  {\mu _i}( - {\lambda_i}{V_{i, s_i}}(x_i, \hat x_i) + \sum\limits_{j \in {I_i}} {{\rho _{i,{\mathop{\rm int}} }}{\bar N_i}\left| {{w_{ij}} - {\hat w_{ij}}} \right|^p}  + {\rho _{i,\rm ext}}|\hat u_i|^{q}) \\
			\le \sum\limits_{i = 1}^\infty  {\mu_{i}}( - {\lambda_i}{V_{i, s_i}}(x_i, \hat x_i) + \sum\limits_{j \in {I_i}} {{\rho _{i,{\mathop{\rm int}} }}{\bar N_i}\left| {{h_{j, s_j}}({x_j}) - {{\hat h}_{j, s_j}}({{\hat x}_j})} \right|^p} + {\rho _{i,\rm ext}}|\hat u_i|^{q}) \\
			\le \sum\limits_{i = 1}^\infty  {\mu_{i}}( - {\lambda_i}{V_{i, s_i}}({x_i},{{\hat x}_i}) + \sum\limits_{j \in {I_i}} {{\rho _{i,{\mathop{\rm int}} }}{\bar N_i}\frac{1}{\alpha _j}{V_{j, s_j}}({x_j},{{\hat x}_j})} + {\rho _{i,\rm ext}}|\hat u_i|^{q}) \\
			\mathop  \leq \limits^{(\ref{gamma})} \sum\limits_{i = 1}^\infty  {\mu_{i}}\left( - {\lambda_i}{V_{i, s_i}}({x_i},{{\hat x}_i}) + \sum\limits_{j \in {I_i}} {{\gamma_{ij}}{V_{j, s_j}}({x_j},{{\hat x}_j}})  + {\rho _{i,\rm ext}}|\hat u_i|^{q}\right).
			\end{array}
			\end{align}
			\rule{\textwidth}{0.4pt}
		\end{figure*}
		
		Letting $V_{s_{vec}}(x,\hat x) := \left(V_{i,s_i}(x_i, \hat x_i)\right)_{i\in\N}$ and using~\eqref{eq_nablaviest} and~\eqref{eq_muest2} , we have that	
		\begin{align*}
		&V_{s'}(f_{s}(x, u),\hat f_{s}(\hat x, \hat u))- V_{s}(x,\hat x) \nonumber\\&\leq  \Bigl[ \mu\trn(-\Lambda + \Gamma)V_{s_{vec}}(x,\hat x) + \sum_{i=1}^{\infty} {\mu_{i}}{\rho _{i,\rm ext}}|\hat u_i|^{q}\Bigr]\nonumber\\&
		\leq -\lambda_{\infty} V_{s}(x,\hat x) + {\rho _{\rm ext}}(|\hat u|_q), 
		\end{align*}
		where ${\rho _{\rm ext}}(t)= \overline{\mu}\;{{\overline\rho _{\rm{ext}}}}t^q$ for all $t\geq 0$.
	\end{proof}
	
	\subsection{Abstractions for linear systems}
	
	In this section, we use the previous results to compute the compositional abstractions for a network of linear switched subsystems.  
	We aim to construct abstractions with output trajectories close enough to those of the concrete system.
	
	Consider the following network on interconnected linear switched systems:
\begin{equation}\label{eq_ith_subsystem_lin}
		{\Sigma _i}: \left\{ \begin{array}{l}
		\mathbf x_i(k+1)  = A_{i, \sigma_i(k)}{\mathbf x}_i(k)+D_{i, \sigma_i(k)}{\mathbf w_i(k)}\\\;\;\;\;\;\;\;\;\;\;\;\;\;\;\;\;\;\;\;+B_{i, \sigma_i(k)}{\mathbf u}_i(k) ,\\
		{\mathbf y_i(k)} = C_{i, \sigma_i(k)}{\mathbf x_i(k)},
		\end{array} \right.%
		\end{equation}
	where $\sigma_i\in{\cal{S}}_i$, $A_{i, \sigma_i(k)}\in \R^{n_i\times n_i}$, $B_{i, \sigma_i(k)}\in \R^{n_i\times m_i}$, $C_{i, \sigma_i(k)}\in \R^{q_i\times n_i}$ and $D_{i, \sigma_i(k)}\in \R^{n_i\times p_i}$ for $i\in\N$.
	
	Choose $ X = \ell^2(\N,(n_i))$ and $U = \ell^2(\N,(m_i))$. We assume that we are given abstractions as $\hat\Sigma_i$, and then provide conditions under which $V_{i, s_i}$ for $s_i\in S_i$ are candidate simulation functions from $\hat\Sigma_i$ to $\Sigma_i$. Assume that there exist a family of matrices $K_{i, s_i}$, positive definite matrices $M_{i, s_i}$ and given $0<\kappa_i<1$ for $i\in \N$, such that the following matrix inequalities hold for all consecutive $s'_i,s_i\in S_i$ (i.e., $s'_i=\sigma_i(k+1), s_i=\sigma_i(k)$ for $k\in \N_0$).
	\begin{subequations}\label{cond22}
		\begin{align}
		&C_{i, s_i}^\top{C_{i, s_i}} \preceq {M_{i, s_i}},\label{cond222}\\&
		3{({A_{i, s_i}} + {B_{i, s_i}}{K_{i, s_i}})^\top}{M_{i, s'_i}}({A_{i, s_i}} + {B_{i, s_i}}{K_{i, s_i}}) - {M_{i, s_i}} \nonumber\\&\preceq  - \kappa_i M_{i, s_i}.\label{cond2222}
		\end{align}
	\end{subequations}
	Note that \eqref{cond2222} could be transformed to a LMI using the Schur complement lemma (see~\cite[Remark 4.7]{swikir2019compositional}).
	
	Take the following simulation function candidates for the mentioned chosen state space:
	\begin{align}\label{sim_func}
	{V_{i, s_i}}({x_i},{{\hat x}_i}) = {({x_i} - {P_{i}}{{\hat x}_i})^\top}{M_{i, s_i}}({x_i} - {P_{i}}{{\hat x}_i}).
	\end{align}
The input is given by the interface function $\nu_i$ as follows.
		\begin{align}\label{interface}
		u_i&=\nu_i (x_i,{\hat x}_i,{\hat u}_i,{\hat w}_i,s_i) \\\notag
		= & K_{i, s_i}({x_i} - P_i {\hat x}_i) + Q_{i, s_i} {\hat x}_i+ R_{i, s_i}{\hat u}_i + T_{i, s_i} {\hat w}_i ,
		\end{align}
		where $P_{i}$, $i\in\N$, are appropriate dimension matrices.
	Assume that the following inequalities hold for some appropriate dimension matrices $Q_{i, s_i}$, $T_{i, s_i}$.
	\begin{subequations}\label{cond2}
		\begin{align}
		&A_{i, s_i} P_{i} =  P_{i} {\hat A}_{i, s_i} - B_{i, s_i} Q_{i, \sigma_i} , \label{cond2_1}\\
		&{D_{i, s_i}} = {P_{i}}{{\hat D}_{i, s_i}} - {B_{i, s_i}}{T_{i, \sigma_i}},\label{cond2_2}\\
		&{C_{i, s_i}}{P_{i}} = {{\hat C}_{i, s_i}}.\label{cond2_3}
		\end{align}
	\end{subequations}
	\begin{theorem}
		Consider two systems $\Sigma_i=({A_{i, s_i}}, {B_{i, s_i}},$ ${C_{i, s_i}}, {D_{i, s_i}})$ and $\hat \Sigma_i=({\hat A_{i, s_i}}, {\hat B_{i, s_i}}, {\hat C_{i, s_i}}, {\hat D_{i, s_i}})$ for $i\in\N$. Suppose that for all $s_i\in S_i$ there exist appropriate matrices $M_{i, s_i}$, $P_{i}$, $K_{i, s_i}$, $Q_{i, s_i}$ and $T_{i, s_i}$ which satisfy \eqref{cond22} and \eqref{cond2}. Then, the functions defined in \eqref{sim_func}  are simulation functions from $\hat \Sigma_i$ to $\Sigma_i$ with inputs given by \eqref{interface}.
	\end{theorem}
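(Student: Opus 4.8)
The plan is to verify directly that, under the hypotheses \eqref{cond22} and \eqref{cond2}, the quadratic candidates $V_{i,s_i}$ of \eqref{sim_func} fulfil the two requirements of Assumption~\ref{SIM_vi_existence} with $p=q=2$ (matching the $\ell^2$ state and input spaces fixed in this subsection), the control being supplied by the interface map \eqref{interface}. Throughout I write $e_i := x_i - P_i\hat x_i$, so that $V_{i,s_i}(x_i,\hat x_i) = e_i\trn M_{i,s_i} e_i$.

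First I would dispatch the output estimate \eqref{sim_eq_viest}. By \eqref{cond2_3}, $h_{i,s_i}(x_i)-\hat h_{i,s_i}(\hat x_i) = C_{i,s_i}x_i - \hat C_{i,s_i}\hat x_i = C_{i,s_i}(x_i - P_i\hat x_i) = C_{i,s_i}e_i$, and then \eqref{cond222} gives $|C_{i,s_i}e_i|^2 = e_i\trn C_{i,s_i}\trn C_{i,s_i}e_i \le e_i\trn M_{i,s_i}e_i = V_{i,s_i}(x_i,\hat x_i)$, so \eqref{sim_eq_viest} holds with $\alpha_i = 1$.

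Next comes the decay inequality \eqref{sim_eq_nablaviest}. I would substitute $u_i$ from \eqref{interface} into \eqref{eq_ith_subsystem_lin} and subtract $P_i$ times the abstract update, collecting the $\hat x_i$-, $\hat w_i$- and $\hat u_i$-dependent parts. The matching identities \eqref{cond2_1} and \eqref{cond2_2} make two of them collapse: e.g.\ \eqref{cond2_1} rewrites $B_{i,s_i}Q_{i,s_i}\hat x_i - P_i\hat A_{i,s_i}\hat x_i$ as $-A_{i,s_i}P_i\hat x_i$, which then combines with $A_{i,s_i}x_i$ into $A_{i,s_i}e_i$, and analogously for \eqref{cond2_2} on the $\hat w_i$-terms. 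This leaves the error update in the form $(A_{i,s_i}+B_{i,s_i}K_{i,s_i})e_i + D_{i,s_i}(w_i-\hat w_i) + \widetilde{B}_{i,s_i}\hat u_i$ with $\widetilde{B}_{i,s_i} := B_{i,s_i}R_{i,s_i}-P_i\hat B_{i,s_i}$. Inserting this into $V_{i,s_i'}$ and using the elementary bound $(a+b+c)\trn M(a+b+c)\le 3(a\trn Ma+b\trn Mb+c\trn Mc)$, valid for $M\succeq 0$ by Cauchy–Schwarz, the first term equals $3\,e_i\trn(A_{i,s_i}+B_{i,s_i}K_{i,s_i})\trn M_{i,s_i'}(A_{i,s_i}+B_{i,s_i}K_{i,s_i})e_i$, which by \eqref{cond2222} is $\le(1-\kappa_i)e_i\trn M_{i,s_i}e_i = (1-\kappa_i)V_{i,s_i}(x_i,\hat x_i)$; the remaining two terms are bounded by $\rho_{i,\mathrm{int}}|w_i-\hat w_i|^2$ and $\rho_{i,\mathrm{ext}}|\hat u_i|^2$, where $\rho_{i,\mathrm{int}}$ and $\rho_{i,\mathrm{ext}}$ are the maxima over $s_i',s_i\in S_i$ of $3\norm{D_{i,s_i}\trn M_{i,s_i'}D_{i,s_i}}$ and $3\norm{\widetilde{B}_{i,s_i}\trn M_{i,s_i'}\widetilde{B}_{i,s_i}}$ respectively, finite because each $S_i$ is a finite set. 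Rearranging yields \eqref{sim_eq_nablaviest} with $\lambda_i = \kappa_i\in(0,1)$.

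I do not foresee a genuine obstacle: the argument is routine linear algebra. The only items demanding care are (i) carrying out the cancellations correctly through the three matching equations in \eqref{cond2}; (ii) observing that the factor $3$ deliberately built into \eqref{cond2222} is exactly what the threefold Young-type split of the error update consumes; and (iii) noticing that the residual $\widetilde{B}_{i,s_i}\hat u_i$ does not vanish, since no matching condition is imposed on $\hat B_{i,s_i}$, so it is legitimately absorbed into the external-input term. Uniformity of the resulting constants $\alpha_i,\lambda_i,\rho_{i,\mathrm{int}},\rho_{i,\mathrm{ext}}$ in $i$ is not claimed by this theorem; it is imposed separately (as in Assumption~\ref{ass_external_gains}) when the local functions are aggregated through Theorem~\ref{MTC}.
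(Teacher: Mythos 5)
Your proposal is correct and follows essentially the same route as the paper: the output bound via \eqref{cond2_3} and \eqref{cond222} with $\alpha_i=1$, the cancellation of the $\hat x_i$- and $\hat w_i$-terms through \eqref{cond2_1}--\eqref{cond2_2} under the interface \eqref{interface}, and the threefold split of the error update absorbed by the factor $3$ in \eqref{cond2222}, yielding $\lambda_i=\kappa_i$ and the same $\rho_{i,\mathrm{int}},\rho_{i,\mathrm{ext}}$ (your $\norm{D_{i,s_i}\trn M_{i,s_i'}D_{i,s_i}}$ coincides with the paper's $|\sqrt{M_{i,s_i'}}D_{i,s_i}|^2$). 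The only cosmetic difference is that you invoke the bound $(a+b+c)\trn M(a+b+c)\le 3(a\trn Ma+b\trn Mb+c\trn Mc)$ directly, whereas the paper expands all cross terms and then applies Young's inequality, which amounts to the same estimate.
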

	\begin{proof}
		According to \eqref{cond2_3}, we have 
		\begin{align*}
		| C_{i, s_i}x_i \!- \!\hat{C}_{i, s_i}&\hat{x}_i| =\\
		&\big((x_i\!-\!P_{i}\hat C_{i, s_i})^\top \!C_{i, s_i}^\top \!C_{i, s_i}(x_i\!-\!P_{i}\hat C_{i, s_i})\big)^\frac{1}{2}.
		\end{align*}
			Using \eqref{cond222}, it is clear that $| {C_{i, s_i}x_i - \hat C_{i, s_i}\hat x_i}|^2 \le V_{i, s_i}({x_i},{{\hat x}_i})$ holds for all $x_i \in \R^{n_i}$, $\hat x_i \in \R^{\hat n_i}$. Then, \eqref{sim_eq_viest} is satisfied with $\alpha_i=1$, $i\in\N$, $p=2$.
			
			Now, we proceed to show that \eqref{sim_eq_nablaviest} is satisfied, too.
			
			By using \eqref{cond2_1}, \eqref{cond2_2} and considering the $u_i$ given by \eqref{interface}, $A_{i, s_i}x_i + B_{i, s_i}u_{i} + D_{i, s_i}w_i - P_{i}(\hat A_{i, s_i}\hat x_i + \hat B_{i, s_i}\hat u_i + \hat D_{i, s_i}\hat w_i)$ is simplified to $(A_{i, s_i} + B_{i, s_i}K_{i, s_i})(x_i - P_{i}\hat x_i) + D_{i, s_i}(w_i - \hat w_i) + (B_{i, s_i}R_{i, s_i} - P_{i}\hat B_{i, s_i})\hat u_i$.
		Therefore, we obtain 
		\begin{align}\label{3rd cond}
		V_{i,s_i'}&\left(f_{i,s_i}(x_i, w_i, u_{i}),\hat f_{i,s_i}(\hat x_i, \hat w_i, \hat u_i)\right)- V_{i,s_i}(x_i,\hat x_i)\nonumber\\
		=&   {({x_i} - {P_{i}}{{\hat x}_i})^\top}[{({A_{i, s_i}} + {B_{i, s_i}}{K_{i, s_i}})^\top}{M_{i, s'_i}} \nonumber\\
		&\times({A_{i, s_i}} + {B_{i, s_i}}{K_{i, s_i}}) - {M_{i, s_i}}]{({x_i} - {P_{i}}{{\hat x}_i})}\nonumber\\&
		+ [2{({x_i} - {P_{i}}{{\hat x}_i})^\top}{({A_{i, s_i}} + {B_{i, s_i}}{K_{i, s_i}})^\top}]{M_{i, s'_i}}\nonumber\\
		&\times[{D_{i, s_i}}(w_i-\hat w_i)]+[2{({x_i} - {P_{i}}{{\hat x}_i})^\top}\nonumber\\
		&\times{({A_{i, s_i}} + {B_{i, s_i}}{K_{i, s_i}})^\top}]{M_{i, s'_i}}[(B_{i, s_i}R_{i, s_i}-P_{i}\hat B_{i, s_i})\hat u_i]\nonumber\\&+[2(w_i-\hat w_i)^\top {D_{i, s_i}}^\top]{M_{i, s'_i}}[(B_{i, s_i}R_{i, s_i}-P_{i}\hat B_{i, s_i})\hat u_i]\nonumber\\&+|\sqrt{M_{i, s'_i}}D_{i, s_i}(w_i-\hat w_i)|^2\nonumber\\&+|\sqrt{M_{i, s'_i}}(B_{i, s_i}R_{i, s_i}-P_{i}\hat B_{i, s_i})\hat u_i|^2. 
		\end{align}
		Using Young's inequality and \eqref{cond2222}, we can obtain the following:
		\begin{align*}\label{4th cond}
		&V_{i,s_i'}\left(f_{i,s_i}(x_i, w_i, u_i),\hat f_{i,s_i}(\hat x_i, \hat w_i, \hat u_i)\right)- V_{i,s_i}(x_i,\hat x_i)\leq\nonumber\\&   - \kappa_i V_{i,s_i}(x_i,\hat x_i) +3|\sqrt{M_{i, s'_i}}D_{i, s_i}|^2|w_i-\hat w_i|^2\nonumber\\
		&+3\Big|\sqrt{M_{i, s'_i}}(B_{i, s_i}R_{i, s_i}-P_{i}\hat B_{i, s_i})\Big|^2|\hat u_i|^2. 
		\end{align*}
			Thus, \eqref{sim_eq_nablaviest} holds with $p=q=2$, $\lambda_i=\kappa_i$, ${\rho _{\rm{i,ext}}}=3\mathop {\max }\limits_{{s_i}} \{|\sqrt{M_{i, s'_i}}(B_{i, s_i}R_{i, s_i}-P_{i}\hat B_{i, s_i})|^2 \}$ and ${\rho _{{\mathop{i,\rm int}} }}=3\mathop{\max }\limits_{{s_i}} \{|\sqrt{M_{i, s'_i}}D_{i, s_i}|^2\}$.
			
			Therefore, the functions of \eqref{sim_func} are simulation functions from $\hat\Sigma_i$ to $\Sigma_i$.
	\end{proof}
	\section{Example}\label{sec:Example}
	We verify the effectiveness of our theoretical results by regulating the frequency deviations in a power network.
	
	We consider a power network modeled by an interconnection of second-order systems, known as swing equation~\cite{Kundur.1994}.
	In particular, we consider two \emph{circular} topologies: in the first one shown in Figure~\ref{block1}, subsystem $i$ is fed by subsystem $i-1$; in the other configuration, subsystem $i$ is fed by subsystem $i+1$, see Figure~\ref{block2}.
	We assume that the network topology switches between these two configurations at certain times.
	Let ${\sigma_i(k)}$ be the switching signal which takes values in the set $\{1,2\}$, where $\sigma_i(k) = 1$ corresponds to the topology shown in Figure~\ref{block1} and $\sigma_i(k) = 2$ corresponds to that illustrated in Figure~\ref{block2}.
	To mathematically describe such a relation between the network topology and the switching signal, we define the function $g_i$ by
	\begin{align*}
	g_i (s) = \left\{\begin{array}{lcl}
	i - 1 & \textrm{if } & s = 1 , \\
	i + 1 & \textrm{if } & s = 2 .
	\end{array}\right.
	\end{align*}
	In that way, each subsystem of the network is described by
	\begin{align*}
	\Sigma _i:\!\left\{ \begin{array}{l}
	\left[ \begin{array}{*{20}{c}}
	\delta_i (k + 1) \\
	\omega_i (k + 1)
	\end{array} \right] \!\!=\!\! \underbrace{\left[ \begin{array}{*{20}{c}}
		1&1\\
		\frac{{ - {l_{i g_{i}(\sigma_i(k))}} }}{{{m_i}}}&{1 - \frac{{{d_i}}}{{{m_i}}}}
		\end{array} \right]}_{=:A_{i,{\sigma_i (k)}}} \underbrace{\left[ \begin{array}{*{20}{c}}
		\delta _i(k)\\
		\omega _i(k)
		\end{array} \right]}_{=:\mathbf x_i (k)} \\
	\;\;\;\;\;\;\;\;\;\;\;\;\;\;\;\;\;\;\;\;\;\;\;
	+  \underbrace{\left[ {\begin{array}{*{20}{c}}
			0 \\
			\frac{l_{i g_{i}(\sigma_i(k))}}{m_i}
			\end{array}} \right]}_{=:D_{i,{\sigma_i(k)}}} \,\,\underbrace{\delta_{g_{i}(\sigma_i(k))} (k)}_{=: \mathbf w_i (k)} \\
	\;\;\;\;\;\;\;\;\;\;\;\;\;\;\;\;\;\;\;\;\;\;\;
	+ \underbrace{\left[ {\begin{array}{*{20}{c}}
			0\\
			\frac{1}{m_i}
			\end{array}} \right]}_{=:B_{i}} {\mathbf u}_{i} (k),\\
	{\mathbf y_i}(k) = \underbrace{\left[ \begin{array}{*{20}{c}}
		C_{ii}\\
		C_{ig_{i}(\sigma_i(k))}
		\end{array} \right]}_{=:C_{i,{\sigma_i (k)}}} \left[ \begin{array}{*{20}{c}}
	\delta_i(k)\\
	\omega_i(k)
	\end{array} \right],
	\end{array} \right.
	\end{align*}
where $C_{ii}\!=\!\begin{bmatrix}
	0& 1
	\end{bmatrix}\!,
	C_{ig_{i}(\sigma_i(k))}\!=\!\begin{bmatrix}
	1& 0
	\end{bmatrix}$, and $\delta_i$, $\omega_i$, $m_i$, $d_i$, and $\mathbf u_{i}$ are the phase angle, frequency, inertia, damping coefficient, and the mechanical input power of bus $i$, respectively.
	The coefficient $l_{i g_i (s_i)}=|v_i||v_{g_i (s_i)}|b_{i g_i (s_i)}$, where $|v_i|$ is the absolute value of the voltage of bus $i$, and $b_{i s_i}$ is the susceptance of the line $(i,g_i (s_i))$ for $s_i \in \{1,2\}$.
	
	The interconnection structure switches between two \emph{circular} topologies shown in Figures~\ref{block1}-\ref{block2}.
	\begin{figure}
		\centering  
		\includegraphics[width=6.5cm]{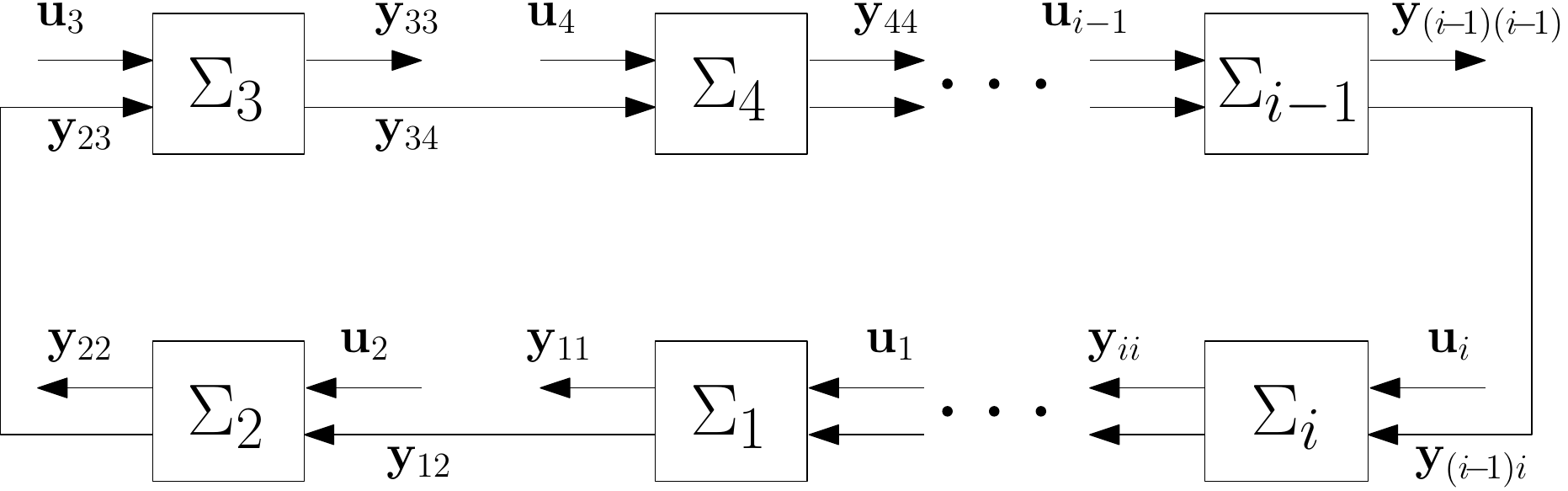}
		\caption{The interconnected system $\Sigma$ for $s_i=1$.}\label{block1}
	\end{figure}
		
	\begin{figure}
		\centering  
		\includegraphics[width=6.5cm]{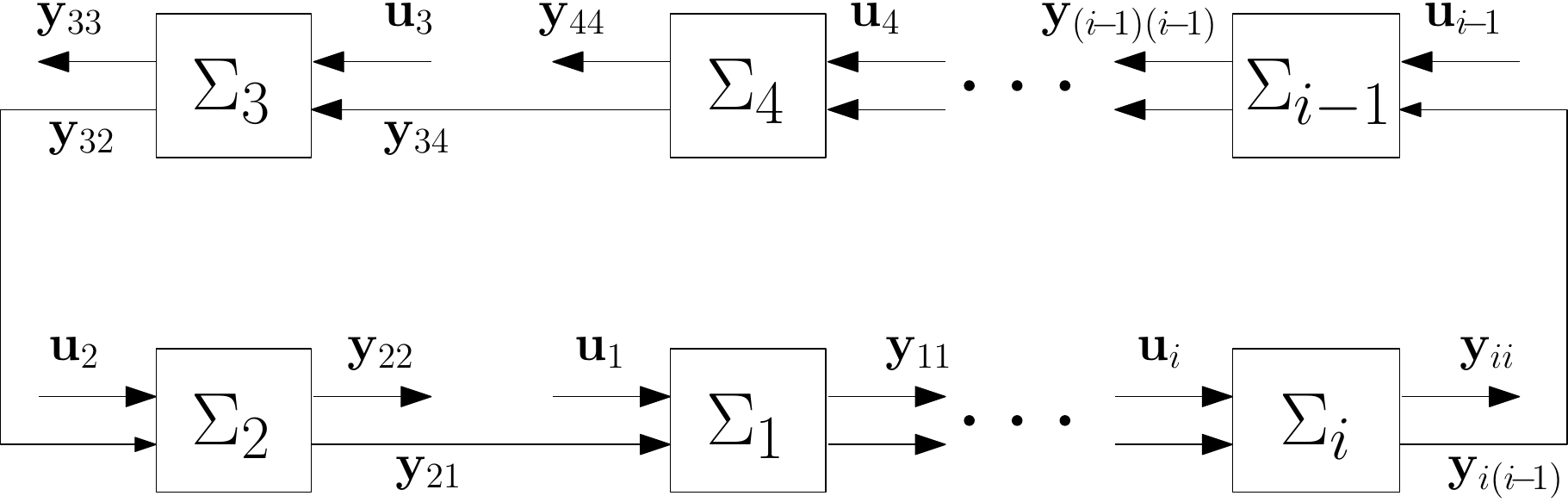}
		\caption{The interconnected system $\Sigma$ for $s_i=2$.}\label{block2}
	\end{figure}
To construct abstractions for $\Sigma$, we construct an abstraction for $\Sigma_i$, $i\in \N$, for both communication topologies, i.e. for both $s_i =1,2$.
Taking $K_{i,s_i}=\left[ \begin{array}{*{20}{c}}
l_{i g_i(s_i)}-\frac{9}{16} m_i &{d_i} - 1.5 m_i
\end{array} \right]$, for both $s_i =1,2$, and $\kappa_i=0.2$, we compute $M_{i,{s'_i}}=M_{i,{s_i}}=\left[ {\begin{array}{*{20}{c}}
			11.20 &12.50\\
			12.50 &17.83
\end{array}} \right]$, for $s'_i,s_i\in \{1,2\}$, which satisfies~\eqref{cond22}.
Now we proceed to compute other matrices so that~\eqref{cond2} holds.
Using~\eqref{cond2_1}, we take $Q_{i,s_i}= l_{ig_{i}(s_i)}$ for $s_i=1,2$.
We obtain $\hat A_{i,{s_i}}=c_{i}$, $s_i=1,2$, and $P_{i} = \left[ {1;c_{i} - 1} \right]$ for constant $c_{i}$ which is determined by solving the equation $c_{i}^2+c_{i}(\frac{d_i}{m_i}-2)+1=0$. Therefore, $c_i=1-\frac{d_i}{2m_i}$.
Moreover, using~\eqref{cond2_2}, we get $\hat D_{i,{s_i}}=0$ and ${T_{i,{s_i}}} = -{l_{i g_i(s)}}$.
Accordingly, $\hat C_{i, s_i}=C_{i, s_i}\left[ {\begin{array}{*{20}{c}}
1\\
c_i-1
\end{array}} \right]$. 

We also choose $\hat B_{i, s_i}=\frac{d_i}{2m_i}-0.6$ and $R_{i, s_i}=(B_{i, s_i}^\top M_{i, s_i}B_{i, s_i})^{-1}B_{i, s_i}^\top M_{i, s_i}P_{i}\hat B_{i, s_i}$ to minimize $\rho_{\rm{i,ext}}$ as suggested in~\cite{girard2009hierarchical}.
	
	With the choice of $V_i$, Assumptions \ref{SIM_vi_existence} and~\ref{ass_external_gains} hold with $\alpha_i=1$, $\lambda_i=0.2$, $\rho_{i,\rm int}=0.1455$, $\rho_{\rm{i,ext}}=8.1487\times 10^{-11}$.
	Recalling the circular interconnection topologies, each subsystem is directly fed by either subsystem $i-1$ or $i+1$ at each time instant.
	So \eqref{gamma} gives $\gamma_{ij}=3\mathop{\max }\limits_{{s_i}} \{|\sqrt{M_{i, s'_i}}D_{i, s_i}|^2\}{\bar N_i}\frac{1}{\alpha_j}=0.1455$ for $j\in I_i$ and $\gamma_{ij}=0$ for $j \notin {I_i}$.
	Then we get
	\begin{align*}
	r(\Psi) <\sup_{j \in \N} \sum_{i=1}^{\infty} \psi_{ij}<\frac{0.1455}{0.2}<1, 
	\end{align*}
	which implies that the spectral radius condition~\ref{ass:spectral-radius} is fulfilled.
	Therefore all the hypotheses of Theorem~\ref{MTC} are satisfied.
	
	For the sake of simulations, we consider a network of $1000$ subsystems.
	The parameter values are set as $m_i=10^5\rm{kgm^2}$, $d_i=1 s^{-1}$, $l_{ii}=4\times10^3$ for all $i\in\left\{1,\cdots,1000\right\}$.
	Additionally for $s_i =1$ and $s_i =2$, we have $l_{i (i-1)}=4\times10^3$ and $l_{i (i+1)}=4\times10^3$, respectively.
	Recalling the computed matrices $\hat A_{i, s_i}$ and $\hat B_{i, s_i}$, by taking each local controller of the abstract subsystem as $\hat u_i=\hat x_i$ the network $\hat\Sigma$ gets stabilized at the origin.
	The switching between $s_i=1$ and $s_i=2$ occurs at $k=5n,$ $n\in \N$ time instants.
	The norm of the overall error between the output trajectories of the \emph{abstract} and \emph{concrete} systems and the closed-loop output trajectories of the \emph{concrete} subsystems are, respectively, depicted by Figures~\ref{xx} and~\ref{xx2}.
	From the choice of $\hat u$ and stabilizability of $\hat\Sigma$ at the origin, $\lim_{k\to\infty} |\hat{\mathbf u}(k)|_2 \to 0$.
	This together with~\eqref{sim3} implies that the mismatch between output trajectories converges to zero, which is illustrated by Fig.~\ref{xx}.
	\begin{figure}
		\centering
		\hspace*{-0.5cm}
		\includegraphics[width=9cm]{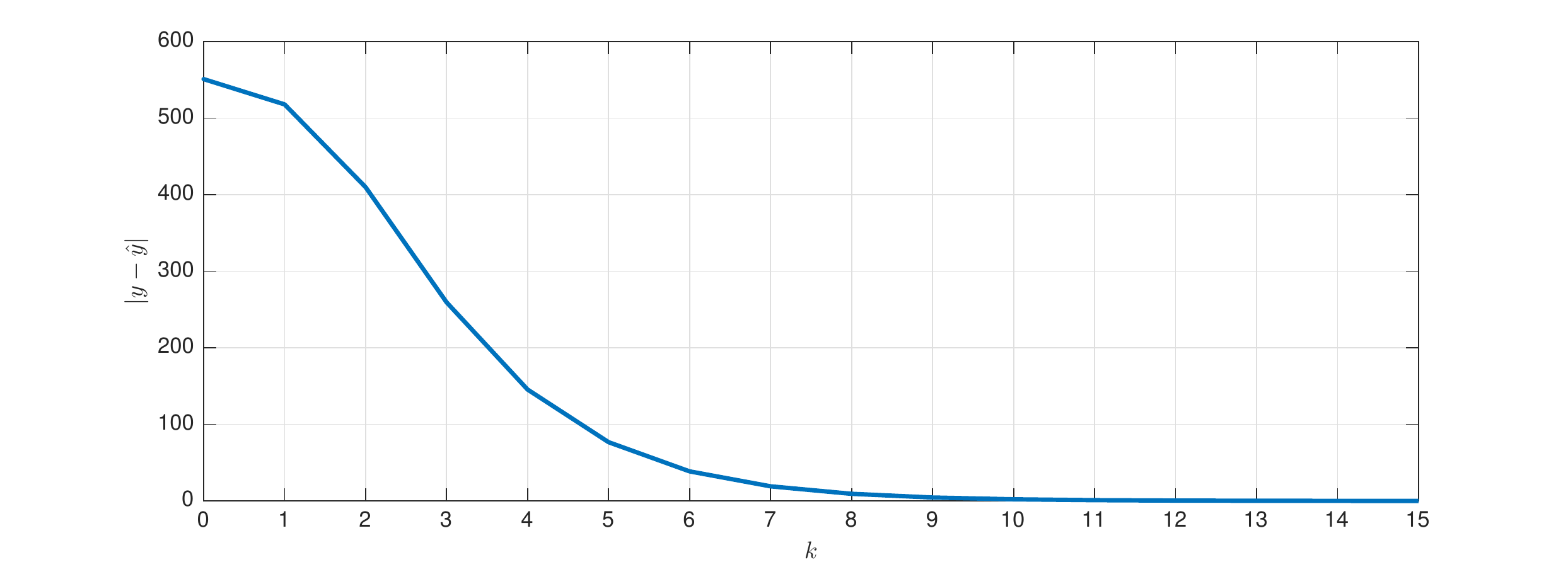}
		\caption{The error norm between the output trajectories of $\Sigma$ and $\hat \Sigma$, consisting of $1000$ subsystems.}
		\label{xx}
	\end{figure}
	\begin{figure}
		\centering
		\hspace*{-0.5cm}
		\includegraphics[width=9cm]{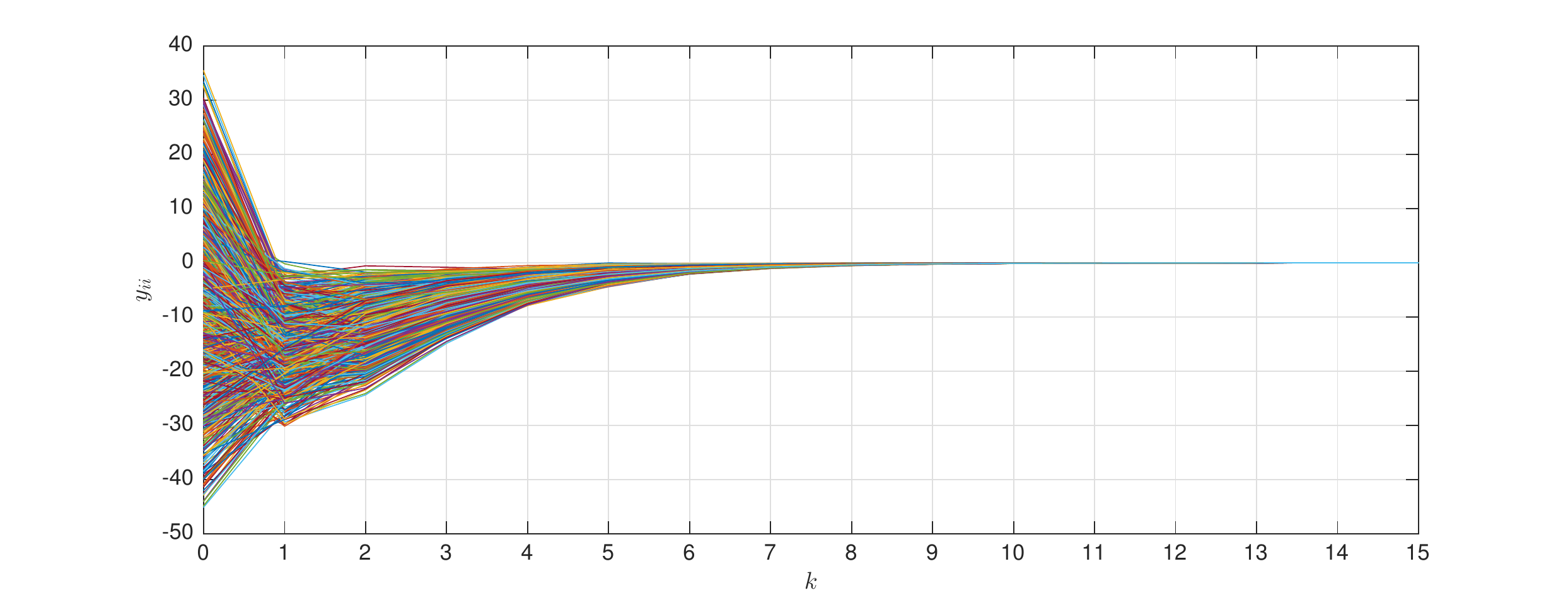}
		\caption{The external outputs $\mathbf y_{ii}$ (i.e. frequency deviations) for $i={1,\dots,1000}$.}
		\label{xx2}
	\end{figure}

	\section{Conclusions}\label{sec:Conclusions}
	
	We constructed continuous abstractions compositionally for an infinite network of switched discrete-time systems with arbitrary switching signals.
	To do this,	we extended the notion of simulation functions to infinite-dimensional systems (networks of infinitely many finite-dimensional systems).
	Following the compositionality approach, we assigned to each subsystem an individual simulation function and constructed each local abstraction accordingly.
	Finally we composed the local abstractions to provide an abstraction of the overall network.
	We showed that the aggregation yields a continuous abstraction of the overall concrete network if the small-gain condition, expressed in terms of a spectral radius criterion, is satisfied.
	For linear systems, our approach boils down to linear matrix inequality conditions which can be computed efficiently.
	We applied our result to a power network with a switched topology.
	
	\bibliographystyle{IEEEtran}
	\bibliography{references}

\end{document}